\newcommand{\purple}{\color{purple}}
\newtheorem{theorem}{Theorem}
\newtheorem{problem}{Problem}
\newtheorem{corollary}[theorem]{Corollary}
\newtheorem{example}{Example}
\begin{document}
\bstctlcite{}
    \title{Promotion/Inhibition Effects in Networks:\\A Model with Negative Probabilities}
  \author{Anqi Dong,~\IEEEmembership{Student Member,~IEEE,}
                     Tryphon T. Georgiou,~\IEEEmembership{Life Fellow,~IEEE,}\\
                      and~Allen Tannenbaum,~\IEEEmembership{Fellow,~IEEE}
  \thanks{Anqi Dong and Tryphon T.~Georgiou are with the Department of Mechanical and Aerospace Engineering, University of California, Irvine, CA 92697; \{anqid2,tryphon\}@uci.edu.}
  \thanks{Allen Tannenbaum is with the Departments of Computer Science and Applied Mathematics \& Statistics at the State University of New York, Stony Brook; arobertan@gmail.com}}

\markboth{Dong \MakeLowercase{\textit{et al.}}: Negative Probabilities in Networks
}{Dong \MakeLowercase{\textit{et al.}}: Negative Probabilities in Networks}

\maketitle
 
\begin{abstract}
Biological networks often encapsulate promotion/inhibition as signed edge-weights of a graph. Nodes may correspond to genes assigned expression levels (mass) of respective proteins. The promotion/inhibition nature of co-expression between nodes is encoded in the sign of the corresponding entry of a sign-indefinite adjacency matrix, though the strength of such co-expression (i.e., the precise value of edge weights) cannot typically be directly measured. Herein we address the inverse problem to determine network edge-weights  based on a sign-indefinite adjacency and expression levels at the nodes. While our motivation originates in gene networks, the framework applies to networks where promotion/inhibition dictates a stationary mass distribution at the nodes. In order to identify suitable edge-weights we adopt a framework of ``negative probabilities,'' advocated by P.\ Dirac and R.\ Feynman, and we set up a likelihood formalism to obtain values for the sought edge-weights. The proposed optimization problem can be solved via a generalization of the well-known Sinkhorn algorithm; in our setting the Sinkhorn-type ``diagonal scalings'' are multiplicative or inverse-multiplicative, depending on the sign of the respective entries in the adjacency matrix, with value computed as the positive root of a quadratic polynomial.
\footnote{Research supported in part by the  AFOSR under grant FA9550-23-1-0096 and ARO under W911NF-22-1-0292. The authorship is alphabetical; the conception of this work is due to the second author.}
\end{abstract}

\begin{IEEEkeywords}
Sinkhorn Algorithm, Negative Probabilities, Gene Regulatory Networks, Promotion/Inhibition Effects
\end{IEEEkeywords}

\IEEEpeerreviewmaketitle
\section{Introduction}

\IEEEPARstart{N}{etworks} are often used to encode interactions between chemical or biological compounds, such as proteins and genes \cite{barabasi2004network,sandhu2015graph,sandhu2016geometry,liu2020computational,seccilmics2020uncovering}. A distinguishing feature of such networks is that edge-weights quantify activation/suppression rates, equivalently,  promotion/inhibition relations between nodes. These in turn impact expression levels of substances assigned to node sites.
Measuring precisely the rates that regulate such interactions is often a difficult task, whereas measuring expression levels at the node sites is substantially easier. Thus, the subject of the present work is to develop a framework for solving the inverse problem to identify such network parameters (edge-weights) from information on expression levels and affinity between nodes. Specifically, we seek to {\em determine sign-indefinite values for edge-weights of a network based on knowledge of nodal mass and the sign of the edge-weights}. The sign of edge-weights is made available in the form of a sign-indefinite adjacency matrix. It is also possible that an estimate of (signed) edge-weights is already available and needs to be updated, so as to restore consistency with measured mass at the nodes. This situation is completely analogous to the case where the prior is only the signed adjacency matrix, and can be treated similarly.

We introduce a natural framework to identify sign-indefinite edge-weights from this type information, by adopting a sign-indefinite transition-probability model to encapsulate ``promotion vs.\ inhibition.'' Incorporating ``negative probabilities'' in a model was advocated early on
by Paul Dirac~\cite{dirac1942bakerian} and  Richard Feynman~\cite{feynman1987negative}, and 
in some scarce work that followed~\cite{bartlett1945negative,burgin2010interpretations}. Dirac and Feynman argued specifically that negative probabilities, when not directly and experimentally measurable, are perfectly acceptable as reflecting internal unobservable manifestations of an underlying mechanism. In our setting however, the interpretation of negative transition probabilities as rates is physically meaningful. Activation/suppression rates, reflected in sign-indefinite edge-weights, dictate the stationary distribution at the nodes (e.g., protein levels, and so on) and are sought to explain the observed data. 

The significance of identifying edge-weights in a gene regulatory network is of vital importance in that it allows assessing the role of nodes and edges in the overall functionality and robustness of networks \cite{sandhu2015graph,sandhu2016geometry}. Indeed, chemical deactivation of sites in a gene regulatory network that may contribute to selective cell-death has been the rationale behind many types of medical treatment of cancer, and the endeavour to properly quantify functionality and robustness of gene networks with suitable metrics, that point to vital nodes and links, is on-going \cite{baptista2023charting}. In particular, various centrality measures as well as network curvature have been proposed to quantify significance of nodes/edges as well as resilience of the network to changes. Determining the value of such metrics presupposes knowledge of edge-weights, which is the modest goal of the inverse problem that we consider herein.

A contribution of the present work is to propose identifying edge weights by minimizing {\em a suitable relative entropy functional between sign-indefinite measures}. The problem can be solved by a {\em Sinkhorn-like iteration} to obtain the minimizer. This new algorithm reduces to the well-known Sinkhorn ``diagonal-scaling iteration'' when the network adjacency matrix is sign definite. In the generality of the present model, with sign-indefinite edge weights, the algorithm amounts to an iterative scaling scheme with the scaling of multiplicative or inverse-multiplicative nature, depending on the sign of a corresponding entry in the adjacency matrix. At each step of the iteration, the scaling factor is readily available as the positive root of a quadratic polynomial.
The new algorithm, much like the Sinkhorn iteration, can be seen as effecting coordinate ascent to obtain the maximum of a concave functional, displaying as a consequence linear convergence rates.

Below, in Section \ref{sec:sec1} we present our problem formulation, followed by theoretical development and algorithmic considerations that are presented in Section \ref{sec:sec2}. Section \ref{sec:examples} provides illustrative examples.

\section{Problem formulation}\label{sec:sec1}
The data for our mathematical problem consist of two entities: first a symmetric matrix $A=A^\prime\in\mathbb \{-1,0,1\}^{n\times n}$, and then a (column) probability vector $p\in\mathbb R^n$, i.e., such that $p_i\geq 0$ for $i\in\{1,\ldots,n\}$ and $\sum_{i=1:n}p_i=1$.

As noted in the introduction, the matrix $A$ may represent the sign-indefinite adjacency of a given gene regulatory network. In such an example, the nodes  $v\in\mathcal V=\{1,2,\ldots,n\}$ correspond to genes and the vertex set cardinality $|\mathcal V|=n$ is typically very large, with $n$ of the order of $10^3-10^4$. The adjacency matrix $A=[A_{ij}]_{i,j=1:n}$ encodes both, a ``connectivity'' structure among genes as well as a constructive/indifferent/destructive (promoting/no effect/inhibiting) contribution of the corresponding pair of genes to the respective protein production levels. This information is encoded in the sign of the respective entry $A_{ij}\in\{-1,0,1\}$, with positive value indicating a constructive co-expression, zero indicating no perceived effect, and negative indicating inhibiting effect. This (biological) information is assumed given to us.

The second piece of datum for our problem is a probability vector
$p_i$ ($i\in\{1,\ldots,n\}$ that represents corresponding relative protein levels, and hence respective potency of the genes sites. That is, $p_i\geq 0$ and $\sum_{i=1:n} p_i=1$, with high values representing strong expression of respective proteins. Once again, this information is given to us from experimental data.

Biologists are typically interested in modeling the contributions of various genes on relative protein production levels in a quantitative manner. To this end, ad-hoc schemes have been utilized; see e.g., \cite{sandhu2016geometry,barabasi2004network} and the references therein. In earlier work, the rationale behind the proposed schemes was to employ the theory of Markov chains by calibrating the constructive/inhibiting effect of sites to production levels to only positive values. This was done by adding a suitable positive constant to all entries of $A$. Accordingly, $A$ becomes a positive matrix, to which the standard theory of Markov chains may be adapted, after scaling $A$ suitably, so that $p$ may be regarded as the stationary distribution of a corresponding Markov chain. Several pitfalls plague all such schemes, and are traceable to the difference between constructive and inhibiting effects, since adding a constant, evidently creates a bias in one direction and not the other.

In the present work, we propose an approach that seems natural for the problem at hand, and brings in the concept of {\em negative probabilities}. As indicated earlier, such concepts have had notable proponents including Dirac and Feynman, and some scant following.
In some detail, we hereby, postulate and seek a {\em sign-indefinite Markov transition} model to explain the observed invariant distribution in $p$, while acknowledging the promoting/inhibiting nature of the links between genes. Feynman's dictum suggests that, as long as internal probabilities in a model are not experimentally observable (as is the case with the sought transition kernel that will be constructed to respect the signs of $A_{ij}$'s), it is acceptable provided that it explains observed and experimentally measurable (non-negative) probabilities and relative frequencies.

Thus, we seek to identify a {\em sign-indefinite probability transition matrix} $\Pi=[\Pi_{ij}]_{i,j=1:n}$ such that $\Pi_{ij}\gtreqqless 0$ in accordance with the similar property for the corresponding $A_{ij}$. Alternatively, $\frac{\Pi_{ij}}{A_{ij}}\geq 0$, while $A_{ij}=0\Rightarrow \Pi_{ij}=0$. To this end, we formulate the following problem seeking to minimize a suitable functional (herein, entropy).

\begin{problem}\label{prob:prob1} Determine $\Pi$ that satisfies the above conditions,
    \begin{subequations}\label{eq:eqs}
     minimizes
    \begin{align}\label{eq:eq2}
       J(\Pi,A):= \!\!\! &\sum_{i,j|A_{i,j}\neq 0} \! \! p_i\frac{\Pi_{ij}}{A_{ij}}\log (\frac{\Pi_{ij}}{A_{ij}}),
    \end{align}
    and satisfies
   \begin{align}\label{eq:eq1a}
        &\sum_{i=1:n}p_i\Pi_{ij}=p_j, \mbox{ for all }j=1:n,\\
        \label{eq:eq1b}
        &\sum_{j=1:n}\Pi_{ij}=1, \mbox{ for all }i=1:n.
    \end{align}
    \end{subequations}
\end{problem}

The use of relative entropy between $A$ and $\Pi$, as above, echoes a similar usage in the justifications of Schr\"odinger's bridges, rooted in large deviations theory (see \cite{chen2021stochastic}. It has also been central in other problems in network theory \cite{zhou2021optimal}. Negative transition probabilities in our context can be interpreted as promotion/inhibition rates.

\section{Theoretical development}\label{sec:sec2}

We first reformulate Problem \ref{prob:prob1} in the following manner.  We define
\[
\mathbf A:=|A|:=[|A_{ij}|]_{i,j=1:n}.
\]
This is seen to be a sign-definite adjacency of the network with entries $\mathbf A_{i,j}\in\{0,1\}$. Likewise, for a sign-indefinite transition probability $\Pi$ as earlier, we write
\[
\mathbf{\Pi} =|\Pi|=[|\Pi_{ij}|]_{i,j=1:n}
\]
for the corresponding (typically) unnormalized transition probability matrix. That is, $\mathbf \Pi$ has non-negative entries, but without guaranteed row-sums being equal to one. Thus, our problem can be re-cast as follows.
\begin{problem}\label{prob:prob2} Determine an entry-wise nonnegative matrix $\mathbf{\Pi}$ that minimizes
\begin{subequations}\label{eq:eqs2}
\begin{align}\label{eq:prob3}
   \mathbf J(\mathbf \Pi,\mathbf A):= \!\!\! \sum_{i,j|\mathbf{A}_{i,j}\neq 0} \! \! p_i\frac{{\mathbf{\Pi}}_{ij}}{\mathbf{A}_{ij}}\log (\frac{{\mathbf{\Pi}}_{ij}}{\mathbf{A}_{ij}}) 
\end{align}
and satisfies the {\bf linear} constraints
\begin{align}
    \label{eq:prob3b}
    &\sum_i p_iA_{ij}\mathbf{\Pi}_{ij}=p_j\\
    \label{eq:prob3c}
    &\sum_jA_{ij}\mathbf{\Pi}_{ij}=1,
\end{align}
for $j,i\in\{1,\ldots,n\}$.
\end{subequations}
\end{problem}

It is clear that Problems \ref{prob:prob1} and \ref{prob:prob2} are equivalent, and of course, that $\mathbf J(\mathbf \Pi,\mathbf A)=J(\Pi,A)$. Problem \ref{prob:prob2} is quite similar to the classical Schr\"odinger problem \cite{chen2021stochastic}, but because $A$ is sign-indefinite, (\ref{eq:prob3b}-\ref{eq:prob3c}) are not the usual conditions. Yet, they are still linear, and the above re-formulation readily leads to the following conclusion.

\begin{theorem}\label{thm:thm1}
    If Problem \ref{prob:prob2} is feasible, the minimizer exists and is unique.
\end{theorem}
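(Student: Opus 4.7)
The plan is to follow the classical convex-analytic recipe: exploit strict convexity of the objective together with convexity of the admissible set, and combine this with coercivity to secure existence of a minimizer, using strict convexity once more for uniqueness.

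First I would set up the feasible set
\[
\mathcal{F} := \bigl\{\mathbf{\Pi}\in\mathbb{R}_{\ge 0}^{n\times n}\colon \mathbf{\Pi}_{ij}=0 \text{ if } A_{ij}=0,\ \text{and \eqref{eq:prob3b}--\eqref{eq:prob3c} hold}\bigr\},
\]
which is a closed convex polyhedral subset of the nonnegative orthant and, by hypothesis, nonempty. With the convention $0\log 0 := 0$, the objective $\mathbf{J}(\cdot,\mathbf{A})$ is continuous on $\mathbb{R}_{\ge 0}^{n\times n}$ and is a nonnegative combination of copies of $x\mapsto x\log x$, which is strictly convex on $[0,\infty)$. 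Consequently $\mathbf{J}$ is strictly convex in every coordinate $\mathbf{\Pi}_{ij}$ for which $p_i>0$ and $A_{ij}\ne 0$ (taking, as is natural, $p_i>0$ for all $i$).

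The main obstacle, and the step that genuinely requires care, is that $\mathcal{F}$ need not be bounded: the sign cancellations in \eqref{eq:prob3b}--\eqref{eq:prob3c} permit entries of $\mathbf{\Pi}$ to drift to infinity in compensating pairs, which is precisely what distinguishes the present setting from the classical Schr\"odinger problem, whose feasible set is a bounded transportation polytope. I would handle this through coercivity of $\mathbf{J}$. Using the elementary bound $x\log x \ge x - 1$ for $x\ge 0$ termwise,
\[
\mathbf{J}(\mathbf{\Pi},\mathbf{A}) \;\ge\; \sum_{(i,j):\,A_{ij}\ne 0} p_i\mathbf{\Pi}_{ij} \;-\; C,
\]
for a finite constant $C$ depending only on $p$ and $A$. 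Every sublevel set $\{\mathbf{\Pi}\in\mathcal{F} : \mathbf{J}(\mathbf{\Pi},\mathbf{A})\le M\}$ therefore keeps all relevant coordinates uniformly bounded. A minimizing sequence then admits a convergent subsequence whose limit lies in the closed set $\mathcal{F}$, and continuity of $\mathbf{J}$ shows that this limit attains the infimum.

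Uniqueness would then be immediate from strict convexity: if $\mathbf{\Pi}^{(1)},\mathbf{\Pi}^{(2)}\in\mathcal{F}$ were two distinct minimizers, then $\tfrac{1}{2}(\mathbf{\Pi}^{(1)}+\mathbf{\Pi}^{(2)})\in\mathcal{F}$ by convexity, and strict convexity on each effective coordinate would force $\mathbf{J}$ of the average to be strictly below the common minimum, a contradiction. Once coercivity is in place, the remaining pieces of the argument are standard and parallel the analysis of the classical Schr\"odinger/Sinkhorn problem to which Problem~\ref{prob:prob2} reduces when $A$ is sign-definite.
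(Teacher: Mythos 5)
Your proposal is correct, and it is worth noting how it relates to the paper's own argument, which is a single sentence: strict convexity of $\mathbf J(\cdot,\mathbf A)$ plus linearity of the constraints. That terse argument settles uniqueness, but existence does not follow from convexity and linearity alone; one needs either compactness of the feasible set or coercivity of the objective. You correctly identify that this is exactly the point where the present problem departs from the classical Schr\"odinger/Sinkhorn setting: because $A$ is sign-indefinite, the equality constraints \eqref{eq:prob3b}--\eqref{eq:prob3c} allow cancellations, so the feasible set is a (possibly unbounded) polyhedron rather than a bounded transportation polytope. Your coercivity step, via $x\log x\ge x-1$ applied termwise (recall $\mathbf A_{ij}\in\{0,1\}$, so each term is $p_i\mathbf\Pi_{ij}\log\mathbf\Pi_{ij}$), gives $\mathbf J(\mathbf\Pi,\mathbf A)\ge\sum_{(i,j):A_{ij}\ne0}p_i\mathbf\Pi_{ij}-C$, which bounds sublevel sets and lets the standard minimizing-sequence argument go through; uniqueness then follows from strict convexity exactly as in the paper. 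So the route is the same convex-analytic one, but your version supplies the existence half that the paper leaves implicit, and this is a genuine improvement rather than padding. One caveat, which you flag and which applies equally to the paper's argument: both coercivity and strict coordinatewise convexity require $p_i>0$ for all $i$ (if some $p_i=0$, the corresponding row of $\mathbf\Pi$ enters neither the objective nor the column constraints, and uniqueness can fail); this positivity is implicitly assumed throughout the paper (e.g., the multiplier $\lambda_i/p_i$ in \eqref{eq:Pistar}), so stating it as a hypothesis, as you do, is the right call.
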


\begin{proof}
  The result follows by virtue of the fact that $\mathbf J(\mathbf \Pi,\mathbf A)$ is strictly convex in $\mathbf \Pi$, and the constraints are linear.
\end{proof}

\begin{corollary}\label{thm:cor1}
     If Problem \ref{prob:prob1} is feasible, the minimizer exists and is unique.
\end{corollary}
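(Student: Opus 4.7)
The plan is to deduce Corollary \ref{thm:cor1} from Theorem \ref{thm:thm1} by exhibiting an explicit bijection between the feasible sets of Problems \ref{prob:prob1} and \ref{prob:prob2} that preserves the objective value. Since $A_{ij}\in\{-1,0,1\}$, the map
\[
\Pi \;\longmapsto\; \mathbf{\Pi} \text{ with } \mathbf{\Pi}_{ij}=|\Pi_{ij}|,
\]
with inverse $\mathbf{\Pi}\mapsto \Pi$ defined entrywise by $\Pi_{ij}=A_{ij}\mathbf{\Pi}_{ij}$, is well defined precisely on the set of $\Pi$ satisfying $\Pi_{ij}/A_{ij}\geq 0$ whenever $A_{ij}\neq 0$ and $\Pi_{ij}=0$ whenever $A_{ij}=0$. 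These are exactly the sign-compatibility requirements of Problem \ref{prob:prob1}.

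First I would verify that the linear constraints match: for entries with $A_{ij}\neq 0$ we have $\Pi_{ij}=A_{ij}\mathbf{\Pi}_{ij}$, and for $A_{ij}=0$ both sides vanish, so \eqref{eq:eq1a} becomes $\sum_i p_i A_{ij}\mathbf{\Pi}_{ij}=p_j$, which is exactly \eqref{eq:prob3b}, and similarly \eqref{eq:eq1b} becomes \eqref{eq:prob3c}. Consequently $\Pi$ is feasible for Problem \ref{prob:prob1} if and only if $\mathbf{\Pi}=|\Pi|$ is feasible for Problem \ref{prob:prob2}. Next I would note that on these feasible sets the objectives coincide entrywise, since $\Pi_{ij}/A_{ij}=\mathbf{\Pi}_{ij}/\mathbf{A}_{ij}$ whenever $A_{ij}\neq 0$, giving $J(\Pi,A)=\mathbf{J}(\mathbf{\Pi},\mathbf{A})$ as was already observed in the excerpt.

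With the bijection and the identity of objectives in hand, feasibility of Problem \ref{prob:prob1} is equivalent to feasibility of Problem \ref{prob:prob2}. Applying Theorem \ref{thm:thm1} yields a unique minimizer $\mathbf{\Pi}^\star$ of Problem \ref{prob:prob2}, and pulling back through the inverse map produces $\Pi^\star$ with $\Pi^\star_{ij}=A_{ij}\mathbf{\Pi}^\star_{ij}$, which is the unique minimizer of Problem \ref{prob:prob1}. Uniqueness transfers because the map is a bijection and preserves the objective.

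There is essentially no obstacle here: once the sign-compatibility conditions on $\Pi$ are recognized as precisely what is needed to make the correspondence $\Pi_{ij}\leftrightarrow A_{ij}\mathbf{\Pi}_{ij}$ a well-defined bijection, the corollary is an immediate consequence of Theorem \ref{thm:thm1}. The only point that warrants care is handling the zero entries of $A$ consistently in both directions of the bijection, which is why I would spell out the inverse map explicitly before invoking the theorem.
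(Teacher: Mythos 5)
Your proposal is correct and follows the same route as the paper: the paper simply asserts the equivalence of Problems \ref{prob:prob1} and \ref{prob:prob2} (together with $J(\Pi,A)=\mathbf J(\mathbf\Pi,\mathbf A)$) and lets the corollary follow from Theorem \ref{thm:thm1}, while you merely spell out the sign-compatible bijection $\Pi_{ij}\leftrightarrow A_{ij}\mathbf{\Pi}_{ij}$ that underlies that equivalence.
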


\subsection{A Sinkhorn-like algorithm}
We now seek to develop a computational approach for our problem that is akin to the Sinkhorn iteration~\cite{georgiou2015positive,chen2021stochastic}. To this end, we invoke duality theory so as to obtain the functional form of the minimizer. Thus, we introduce Lagrange multipliers and obtain the Lagrangian
\begin{align*}
{\mathcal L}(\mathbf{\Pi},p,\lambda,\mu):=
\! \! \! &\sum_{i,j|\mathbf{A}_{i,j}\neq 0} \!\!\!p_i\frac{{\mathbf{\Pi}}_{ij}}{\mathbf{A}_{ij}}\log (\frac{{\mathbf{\Pi}}_{ij}}{\mathbf{A}_{ij}})\\
&+ 
\sum_{j} \mu_j \big(\sum_i  p_iA_{ij} \mathbf{\Pi}_{ij}-p_j\big)\\
&+  
\sum_{i} \lambda_i \big(\sum_j A_{ij} \mathbf{\Pi}_{ij} - 1 \big).
\end{align*}
The first-order optimality condition, $\partial \mathcal L/\partial \mathbf \Pi_{ij}=0$, gives that
\begin{align*}
   \frac{p_i}{\mathbf A_{ij}} \log(\frac{\mathbf{\Pi}_{ij}}{\mathbf A_{ij}}) + \frac{p_i}{\mathbf A_{ij}} + \mu_j p_i A_{ij} + \lambda_i A_{ij}=0,
\end{align*}
for $i,j$  such that $\mathbf A_{ij}=1$, otherwise $\mathbf \Pi_{ij}=0$. Since $\mathbf A_{i,j}\in\{0,1\}$ and $\mathbf A_{i,j}A_{i,j}=A_{i,j}$, the optimizer must have the following functional dependence on the Lagrange multipliers
\begin{align}\label{eq:Pistar}
 \mathbf{\Pi}^*_{ij} = \mathbf A_{ij}\times\exp\big(- 1 - \mu_jA_{ij} -  \frac{\lambda_i}{p_i}A_{ij} \big).
\end{align}

Define $A^{+}=\frac12\left(\mathbf A+A\right)$ and $A^{-}=\frac12\left(\mathbf A-A\right)$, so that $\mathbf A =|A|= A^{+} + A^{-}$ and $A= A^+ - A^-$,  and a scaled set of new parameters $\nu_i:=\lambda_i/p_i$, for $i\in\{1,\ldots,n\}$. Then, the optimal kernel may be written in the form
\begin{align}\label{eq:pi}
    \mathbf{\Pi}^*_{ij} = 
\begin{cases}
    \exp\big(- 1 - \mu_j - \nu_i\big),  &\mbox{when }A_{ij}>0,\\
    \exp\big(- 1 + \mu_j + \nu_i\big)&\mbox{when }A_{ij}<0,\\
    0  &\mbox{when }A_{ij}=0,
\end{cases}
\end{align}
and equivalently, in the form
\begin{align}\label{eq:pi2}
    \mathbf{\Pi}^*_{ij} = \frac{1}{e}\left(A_{ij}^+ e^{-(\mu_j+\nu_i)}+A_{ij}^- e^{+(\mu_j+\nu_i)}\right),
\end{align}
with constraints $F_j(\mu_j)=p_j$ and $B_i(\nu_i)=1$, where
\begin{align*}
  F_j(\mu_j):= &\sum_{i=1:n} \frac{1}{e}\left(A_{ij}^+ e^{-(\mu_j+\nu_i)}-A_{ij}^- e^{+(\mu_j+\nu_i)}\right)p_i\\B_i(\nu_i):=&\sum_{j=1:n} \frac{1}{e}\left(A_{ij}^+ e^{-(\mu_j+\nu_i)}-A_{ij}^- e^{+(\mu_j+\nu_i)}\right).
\end{align*}
We re-write,
\begin{subequations}
\begin{align}\label{eq:F_j}
  F_j(\mu_j) &= a^F_j(\nu) e^{-\mu_j}
   -b^F_j(\nu)e^{\mu_j}\\
\label{eq:B_i}
  B_i(\nu_i) &= a^B_i(\mu) e^{-\nu_i}
   -b^B_i(\mu) e^{\nu_i},
\end{align} 
\end{subequations}
for
\begin{align*}
    &a^F_j(\nu):=\frac{1}{e}\sum_{i=1:n} 
   A_{ij}^+ p_i e^{-\nu_i}, &
   b^F_j(\nu):=\frac{1}{e}\sum_{i=1:n}A_{ij}^-p_i e^{\nu_i},\\
  &a^B_i(\mu) := \frac{1}{e}\sum_{j=1:n} 
   A_{ij}^+ e^{-\mu_j}, &
   b^B_i(\mu):=\frac{1}{e}\sum_{j=1:n}A_{ij}^- e^{\mu_j}, 
\end{align*} 
and we readily observe that $\mu_j$ can be computed explicitly from the constraint $F_j(\mu_j)=p_j$, when the vector $\nu$ is kept fixed, and similarly, $\nu_i$ can be computed from $B_i(\nu_i)=1$.

To see this, consider the function $f(x)=ae^{-x}-be^x$, with $a,b>0$, and observe that it is monotonic, with negative derivative on the whole real axis, having limits $f(-\infty)=\infty$ and $f(\infty)=-\infty$. Thus, for any given value $c$, a solution to $f(x)=c$ may be readily obtained as the logarithm of the positive root of a quadratic, giving,
\begin{subequations}\label{eq:logquad}
\begin{align}\label{eq:logquad1}
    x=\log\left(\frac{-c+\sqrt{c^2+4ab}}{2b}\right)=:g(a,b,c).
\end{align}
For our purposes, the case where $b=0$ is also of interest, and here
    \begin{align}\label{eq:logquad2}
    x=\log\left(\frac{a}{c}\right)=:g(a,0,c).
\end{align}
This follows from $ae^{-x}=c$, and of course, it also coincides with the value $\lim_{b\to 0}g(a,b,c)$.
\end{subequations}

Bringing all of the above together, we arrive at the following iterative algorithm,
where $\mu_j$ is computed as a function of $\nu$, using \eqref{eq:logquad}, to solve
\begin{subequations}\label{eq:SinkhornFB}
\begin{align}\label{eq:Sinkhorn1}   
&F_j(\mu_j) = a^F_j(\nu)e^{-\mu_j}-b^F_j(\nu)e^{\mu_j}=p_j,\\\nonumber &\mbox{ for }j=1:n,
\end{align}
and $\nu_i$ is computed as a function of $\mu$, using \eqref{eq:logquad} to solve
\begin{align}\label{eq:Sinkhorn2}
&B_i(\nu_i) = a^B_i(\mu)e^{-\nu_i}-b^B_i(\mu)e^{\nu_i}=1,\\\nonumber &\mbox{ for }i=1:n.
\end{align}
\end{subequations}
The steps are summarized below:
\begin{algorithm}[H]
\caption{Sinkhorn-like algorithm}\label{alg:1}
\begin{algorithmic}[1]
\STATE Initialize $\nu\in\mathbb R^n$, e.g., setting $\nu=0$.\\[0.045in]
\STATE 
For $j=1:n$, determine $\mu_j=g(a^F_j(\nu),b^F_j(\nu),p_j)$.\\[0.045in]
\STATE 
For $i=1:n$, determine $\nu_i=g(a^B_i(\mu),b^B_i(\mu),1)$.\\[0.045in]
\STATE Repeat steps 2 and 3 until convergence.
\end{algorithmic}
\end{algorithm}

When $A^-$ is the zero matrix, the above system of equations \eqref{eq:SinkhornFB} reduces to the classical Schr\"odinger system \cite{chen2021stochastic}. Indeed, in this case, $b^F,b^B$ vanish and the steps for finding $\mu_j$ and $\nu_i$, from $\nu$ and $\mu$, respectively, reduce to the standard diagonal scaling of the celebrated Sinkhorn algorithm, giving $e^{-\mu_j}=p_j/a_j^F(\nu)$, and similarly, $e^{-\nu_i}=1/a^B_i(\mu)$. Thus, the above algorithm represents a generalization of the Sinkhorn algorithm; the sign-indefiniteness of $A$ prevents simple diagonal scaling as an option to satisfy iteratively the boundary conditions, as in the original Sinkhorn iteration \cite{georgiou2015positive}, yet the update can still be done quite easily using \eqref{eq:logquad}.

Algorithm~\ref{alg:1} can also be seen as implementing {\em coordinate ascent} on the dual functional of the Lagrangian, with exact evaluation of the maximizer of a smooth functional along coordinate directions at each step. A detailed analysis together with extension of the general framework to address inverse problems in higher dimensions is the subject of forthcoming work~\cite{DGT}.

\subsection{Gradient descent method}

Once again, using duality as earlier, the dual function corresponding to the primal problem (Problem \ref{prob:prob2}) reads
\begin{align*}
g(\mu,\lambda)
= \!\!\!& \sum_{i,j|\mathbf{A}_{i,j}\neq 0} \!\!\!-\mathbf A_{ij}\exp(-\mu_jA_{ij} - \frac{\lambda_i}{p_i}A_{ij})p_{i}\\
&-\sum_j\mu_{j}p_{j} - \sum_i\lambda_i.
\end{align*}
Therefore, the dual problem reads
\begin{align*}
\max_{\mu,\lambda} \ \ g(\mu,\lambda),
\end{align*}
and can be also solved by gradient ascent. Specifically, taking partials of $g(\lambda,\mu)$ with respect to $\mu$ and $\lambda$, we have 
\begin{subequations}
\begin{align}
    \frac{\partial g}{\partial \mu_{j}}  &=  \frac{1}{e}\sum_{i} p_iA_{ij}\exp\big(- \mu_jA_{ij} -  \frac{\lambda_i}{p_i}A_{ij} \big) - p_j,\\
    \frac{\partial g}{\partial \lambda_{i}} &= \frac{1}{e}\sum_{j} A_{ij}\exp\big(- \mu_jA_{ij} -  \frac{\lambda_i}{p_i}A_{ij} \big)-1.
\end{align}
\end{subequations}
The optimizer in \eqref{eq:Pistar} is obtained by iteratively updating the dual variables in the direction of the gradient with a suitable step size, and is sketched below.

\begin{algorithm}[H]
\caption{Gradient descent method}\label{alg:2}
\begin{algorithmic}[1]
\STATE Initialize $\lambda=0$, $\mu= 0$, and select step size $\gamma$.
\STATE For $j=1:n$, determine
\[
 \mu_{j}^{\rm next} = \mu_{j} + \gamma(\frac{1}{e}\sum_{i} p_iA_{ij}\exp\big(- \mu_jA_{ij} -  \frac{\lambda_i}{p_i}A_{ij} \big) - p_j).
\]
\STATE For $j=1:n$, determine
\[
 \nu_{j}^{\rm next} = \nu_{j} + \gamma(\frac{1}{e}\sum_{i} A_{ij}\exp\big(- \mu_jA_{ij} -  \frac{\lambda_i}{p_i}A_{ij} \big) - 1).
\]
\STATE Update the values of $\mu,\nu$ and repeat until convergence.
\end{algorithmic}
\end{algorithm}

\subsection{Remarks}
Although the functional $\mathbf J$ in Problem~\ref{prob:prob2} is convex, existence of a minimizer hinges on whether the constraints are feasible. It is of interest to determine efficient ways to test feasibility, especially for very large matrices.

 It is also important to note that the theory applies to the case where $A$ is not necessarily symmetric and/or does not have a symmetric sign structure. A non-symmetric adjacency represents a directed graph, while asymmetry of the sign structure brings in interesting feedback dynamics.

\section{Illustrative examples}\label{sec:examples}
We now discuss representative examples to illustrate what can be accomplished with the proposed formulation. The code used for working out all of the examples may be found at {\em \url{https://github.com/dytroshut/negative-probability-forward-backward}}.

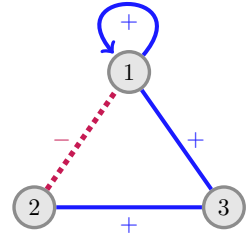
\begin{figure}[htb!]
\centering
\scalebox{.9}{
\begin{tikzpicture}
[p/.style={circle,draw=black!45,fill=black!10,line width=0.5mm,inner sep=0pt,minimum size=6mm}]
\node (n1) at (0,2) [p] {$1$};
\node (n2) at (-1.4,0) [p] {$2$};
\node (n3) at (1.4,0) [p] {$3$};

\path[line width=.6mm][blue!90] (n1) edge [in=130,out=50,loop] node[midway,below]{$\mathlarger{+}$} (n1);

\draw[line width=.8mm,dotted][purple!90] (n2) to [] node[midway,left](){$ \mathlarger{-}$} (n1);
\draw[line width=.6mm][blue!90] (n1) to [] node[midway,right](){$\mathlarger{+}$} (n3);
\draw[line width=.6mm][blue!90] (n3) to [] node[midway,below](){$\mathlarger{+}$} (n2);
\end{tikzpicture}}
\caption{Network topology}
\label{fig:example1}
\end{figure}
\begin{example}\label{ex:example1}
   Consider the $3$-node network shown in Figure \ref{fig:example1}.
 We take $A$ to be the following sign-indefinite adjacency matrix:
\begin{align*}
A = 
\begin{bmatrix}
\phantom{-}1 &-1 &\phantom{-}1\\
-1 &\phantom{-}0 &\phantom{-}1\\
\phantom{-}1  &\phantom{-}1 &\phantom{-}0
\end{bmatrix}.
\end{align*}
Note that the sign of the edge between the first two nodes is negative, indicating inhibiting effect. We take $p = [0.3 \ 0.3\ 0.4]$. Algorithms \ref{alg:1} and \ref{alg:2} converge to the same value (evidently), giving 
\begin{align*}
\Pi = 
\begin{bmatrix}
\phantom{-}0.7631 &-0.0482 &0.2581\\
-0.0482      &\phantom{-}0 &1.0482\\
\phantom{-}0.2138  &\phantom{-}0.7862 &0
\end{bmatrix},
\end{align*}
which can be verified to satisfy the constraints. As can be observed, $\Pi$ is not symmetric, as it was not required and only respects the signature structure of $A$.
\end{example}

\begin{example}\label{ex:example1a}
Once again we consider the $3$-node network shown in Figure \ref{fig:example1}. However, this time, we take $A$ to have the following sign-structure,
\begin{align*}
A = 
\begin{bmatrix}
\phantom{-}1  &\phantom{-}1 &\phantom{-}1\\
-1 &\phantom{-}0 &\phantom{-}1\\
\phantom{-}1  &\phantom{-}0 &\phantom{-}0
\end{bmatrix},
\end{align*}
which is already not symmetric. We take the same probability vector $p = [0.3 \ 0.3\ 0.4]$. The solution in this case can be computed by either algorithm to be 
\begin{align*}
\Pi = 
\begin{bmatrix}
\phantom{-}0.0009 &0.9962 &0.0029\\
-0.3321 &0 &1.3321\\
\phantom{-}1  &0 &0
\end{bmatrix},
\end{align*}
which can be verified to satisfy the constraints $\Pi^\prime p=p$ and $\Pi \mathds 1=\mathds 1$. This example highlights the fact that the theory applies equally well to the case where $A$ is not symmetric and/or does not have symmetric sign structure.
\end{example}

\begin{example}\label{sec:exp2}
We now work out an example with a substantially larger adjacency matrix $A$.
The sign indefinite adjacency matrix $A$ chosen for this example is
\begin{align*}
A = 
\begin{bmatrix}
\phantom{-}1  &\phantom{-}0 &\phantom{-}0 &\phantom{-}0 &-1 &\phantom{-}0 &\phantom{-}1 &\phantom{-}0 &\phantom{-}1 &\phantom{-}0\\
\phantom{-}0  &\phantom{-}0 &\phantom{-}1 &\phantom{-}0 &\phantom{-}0  &\phantom{-}0 &\phantom{-}0 &\phantom{-}0 &\phantom{-}0 &\phantom{-}1\\
\phantom{-}0  &\phantom{-}1 &\phantom{-}0 &\phantom{-}0 &\phantom{-}1  &\phantom{-}0 &\phantom{-}1 &\phantom{-}0 &\phantom{-}0 &\phantom{-}0\\
\phantom{-}0  &\phantom{-}0 &\phantom{-}0 &\phantom{-}0 &\phantom{-}1  &\phantom{-}0 &\phantom{-}0 &\phantom{-}1 &\phantom{-}0 &\phantom{-}0\\
-1            &\phantom{-}0 &\phantom{-}1 &\phantom{-}1 &\phantom{-}1  &-1 &\phantom{-}0 &\phantom{-}0  &\phantom{-}0 &\phantom{-}0\\
\phantom{-}0  &\phantom{-}0 &\phantom{-}0 &\phantom{-}0 &-1 &\phantom{-}0 &\phantom{-}1 &\phantom{-}0  &\phantom{-}1 &\phantom{-}0\\
\phantom{-}1  &\phantom{-}0 &\phantom{-}1 &\phantom{-}0 &\phantom{-}0  &\phantom{-}1 &\phantom{-}1 &\phantom{-}0  &\phantom{-}0 &\phantom{-}0\\
\phantom{-}0  &\phantom{-}0 &\phantom{-}0 &\phantom{-}1 &\phantom{-}0  &\phantom{-}0 &\phantom{-}0 &\phantom{-}0  &\phantom{-}1 &-1\\
\phantom{-}1  &\phantom{-}0 &\phantom{-}0 &\phantom{-}0 &\phantom{-}0  &\phantom{-}1 &\phantom{-}0 &\phantom{-}1  &\phantom{-}0 &\phantom{-}1\\
\phantom{-}0  &\phantom{-}1 &\phantom{-}0 &\phantom{-}0 &\phantom{-}0  &\phantom{-}0 &\phantom{-}0 &-1 &\phantom{-}1 &\phantom{-}0
\end{bmatrix}
\end{align*}
The probability vector $p$ is
\begin{align*}
p = \begin{bmatrix}
0.1 \ 0.05 \ 0.05 \ 0.15 \ 0.2 \ 0.05 \ 0.03 \ 0.07 \ 0.25 \ 0.05
\end{bmatrix}.
\end{align*}
The topology of the network is shown in Figure~\ref{fig:example2}, with color-coded display of the negative values using dashed red curves. The resulting matrix $\Pi$  is given in~\eqref{eq:examplePi}.
\end{example}

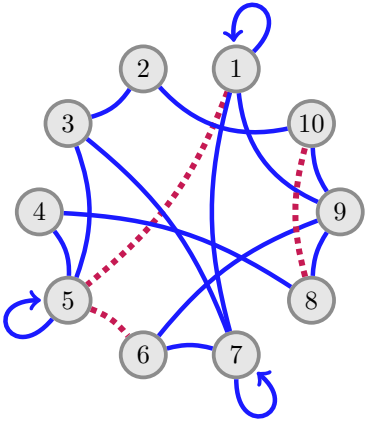
\begin{figure}[htb!]
\centering
\scalebox{1}{
\begin{tikzpicture}
[p/.style={circle,draw=black!45,fill=black!10,line width=0.5mm,inner sep=0pt,minimum size=6mm}]
\node (n1) at (0.6180,1.9021) [p] {$1$};
\node (n2) at (-0.6180,1.9021) [p] {$2$};
\node (n3) at (-1.6180,1.1756) [p] {$3$};			
\node (n4) at (-2,0) [p] {$4$};
\node (n5) at (-1.6180,-1.1756) [p] {$5$};
\node (n6) at (-.6180,-1.9021) [p] {$6$};
\node (n7) at (.6180,-1.9021) [p] {$7$};
\node (n8) at (1.6180,-1.1756) [p] {$8$};
\node (n9) at (2,0) [p] {$9$};
\node (n10) at (1.6180,1.1756) [p] {$10$};

\path[line width=.6mm][blue!90] (n1) edge [in=95,out=45,loop] node {} (n1);
\path[line width=.6mm][blue!90] (n5) edge [in=-180,out=-130,loop] node {} (n5);
\path[line width=.6mm][blue!90] (n7) edge [in=-40,out=-90,loop] node {} (n7);

\draw[line width=.8mm,dotted][purple!90] (n1) to [bend left=15] node[midway,right](){} (n5);
\draw[line width=.6mm][blue!90] (n1) to [bend right=15] node[midway,right](){} (n7);
\draw[line width=.6mm][blue!90] (n1) to [bend right=30] node[midway,right](){} (n9);
\draw[line width=.6mm][blue!90] (n2) to [bend left=20] node[midway,right](){} (n3);
\draw[line width=.6mm][blue!90] (n2) to [bend right=30] node[midway,right](){} (n10);
\draw[line width=.6mm][blue!90] (n3) to [bend left=20] node[midway,right](){} (n5);
\draw[line width=.6mm][blue!90] (n3) to [bend left=15] node[midway,right](){} (n7);
\draw[line width=.6mm][blue!90] (n4) to [bend left=20] node[midway,right](){} (n5);
\draw[line width=.6mm][blue!90] (n4) to [bend left=15] node[midway,right](){} (n8);
\draw[line width=.8mm,dotted][purple!90] (n5) to [bend left=15] node[midway,right](){} (n6);
\draw[line width=.6mm][blue!90] (n6) to [bend left=15] node[midway,right](){} (n7);
\draw[line width=.6mm][blue!90] (n6) to [bend left=15] node[midway,right](){} (n9);
\draw[line width=.6mm][blue!90] (n8) to [bend left=15] node[midway,right](){} (n9);
\draw[line width=.8mm,dotted][purple!90] (n8) to [bend left=15] node[midway,right](){} (n10);
\draw[line width=.6mm][blue!90] (n9) to [bend left=15] node[midway,right](){} (n10);
\end{tikzpicture}}
\caption{The topology of a $10$-node network}
\label{fig:example2}
\end{figure}

\begin{figure*}[htb!]
\begin{equation}\label{eq:examplePi}
\Pi = 
\begin{bmatrix}
\phantom{-}0.4337  &0   &0   &0   &{\purple-0.3398}  &\phantom{-}0  &0.1205  &\phantom{-}0   &0.7857  &\phantom{-}0\\
\phantom{-}0   &0  &0.2945  &0   &\phantom{-}0   &\phantom{-}0   &0  &\phantom{-}0    &0    &\phantom{-}0.7055\\
\phantom{-}0    &0.4595   &0   &0    &\phantom{-}0.4150   &\phantom{-}0    &0.1255  &\phantom{-}0  &0  &\phantom{-}0\\
\phantom{-}0   &0   &0   &0    &\phantom{-}0.6451   &\phantom{-}0   &0    &\phantom{-}0.3549   &0    &\phantom{-}0\\
{\purple-0.1930}   &0   &0.1623  &0.5895  &\phantom{-}0.6441  &{\purple-0.2029}   &0   &\phantom{-}0   &0  &\phantom{-}0\\
\phantom{-}0   &0    &0    &0   &{\purple-0.2470}    &\phantom{-}0    &0.1658   &\phantom{-}0    &1.0812    &\phantom{-}0\\
\phantom{-}0.4064    &0    &0.0940    &0    &\phantom{-}0    &\phantom{-}0.3866    &0.1129   &\phantom{-}0  &0   &\phantom{-}0\\
\phantom{-}0     &0     &0    &0.4586   &\phantom{-}0    &\phantom{-}0    &0   &\phantom{-}0    &0.9887   &{\purple-0.4473}\\
\phantom{-}0.3321    &0    &0    &0    &\phantom{-}0    &\phantom{-}0.3159  &0    &\phantom{-}0.1678    &0    &\phantom{-}0.1841\\
\phantom{-}0   &0.5405    &0  &0    &\phantom{-}0    &\phantom{-}0   &0   &{\purple-0.5038}    &0.9633   &\phantom{-}0  
\end{bmatrix}
\end{equation}
\hrulefill
\vspace*{4pt}
\end{figure*}

\begin{example}\label{sec:num_large}
We finally consider a random graph with $100$ nodes and $1428$ edges, shown in Figure \ref{fig:example}. The graph has $1360$ edges with positive signs and the remaining with negative. The edges with negative transition probability are randomly assigned and colored in red, and the ones with positive probability are shown in blue. Figure \ref{fig:convergence} displays the convergence of the Sinkhorn-like algorithm \ref{alg:1} in terms of the objective function $\mathbf J(\mathbf \Pi,\mathbf A)$ and marginal constraint violation\footnote{Measure used to assess convergence of the standard Sinkhorn iteration.} $\log(\|\Pi^\prime p-p\|)$. A linear convergence rate on the constraint/marginal violation is observed, and can be shown by pointing to the fact that Algorithm \ref{alg:1} can be seen as a coordinate ascent algorithm. The example can be replicated using code and data in the project's website.
\end{example}

\begin{figure}[H]
\centering
\includegraphics[width=0.9\columnwidth,trim={4.2cm 3.2cm 3.5cm 3cm},clip]{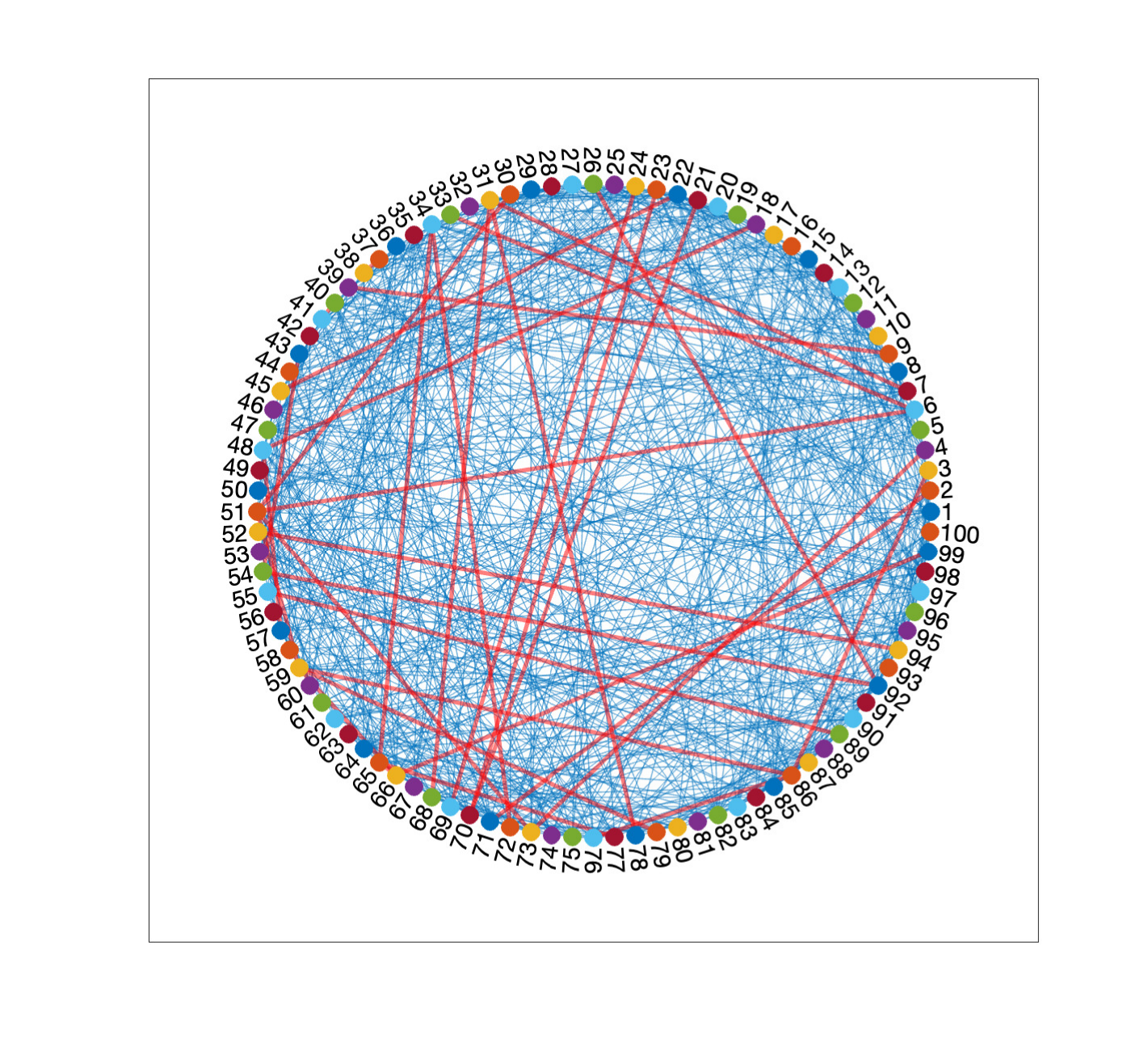}
\caption{The topology of the 100-node random network used. Nodes are enumerated and arranged sequentially on a circle. Edges assigned negative transition probability are shown in red, while edges corresponding to positive transition probability are shown in blue.}
\label{fig:example}
\end{figure}

\begin{figure}[H]
\centering
\includegraphics[width=0.85\columnwidth,trim={0.6cm 0cm 0.6cm 0cm},clip]{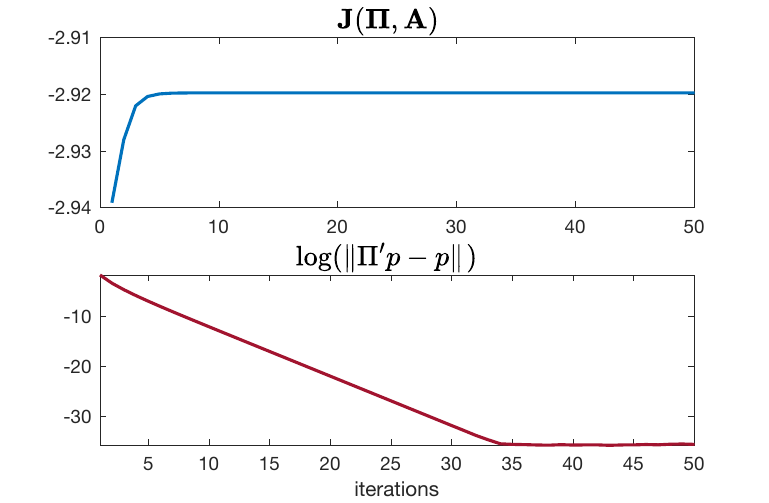}
\caption{The top subplot shows $\mathbf{J}(\mathbf{\Pi}, \mathbf{A})$ as a function of iteration number. The second subplot shows the violation/error of the marginal constraint $|\Pi^\prime p - p|$ in log-scale.}
\label{fig:convergence}
\end{figure}

\section{Concluding remarks}
The purpose of the present work has been to introduce a sign-indefinite probabilistic model
to reflect promoting/inhibiting affinity between nodes, as in co-expression of genes in gene regulatory networks. The theory is of independent interest and leads to a nonstandard extension of the classical Sinkhorn algorithm. An important question that remains is on how to efficiently test feasibility of Problems \ref{prob:prob1} and \ref{prob:prob2}. In addition, in light of the fact that the Sinkhorn iteration addresses a static version of the more general Schr\"odinger bridge problem \cite{chen2021stochastic}, it is of interest to explore a setting that allows modeling evolution of nodal mass, allowing for promotion/inhibition effects, under suitable dynamics.

\bibliographystyle{IEEEtran}
\bibliography{references}

\begin{thebibliography}{10}
\providecommand{\url}[1]{#1}
\csname url@samestyle\endcsname
\providecommand{\newblock}{\relax}
\providecommand{\bibinfo}[2]{#2}
\providecommand{\BIBentrySTDinterwordspacing}{\spaceskip=0pt\relax}
\providecommand{\BIBentryALTinterwordstretchfactor}{4}
\providecommand{\BIBentryALTinterwordspacing}{\spaceskip=\fontdimen2\font plus
\BIBentryALTinterwordstretchfactor\fontdimen3\font minus
  \fontdimen4\font\relax}
\providecommand{\BIBforeignlanguage}[2]{{%
\expandafter\ifx\csname l@#1\endcsname\relax
\typeout{** WARNING: IEEEtran.bst: No hyphenation pattern has been}%
\typeout{** loaded for the language `#1'. Using the pattern for}%
\typeout{** the default language instead.}%
\else
\language=\csname l@#1\endcsname
\fi
#2}}
\providecommand{\BIBdecl}{\relax}
\BIBdecl

\bibitem{barabasi2004network}
A.-L. Barabasi and Z.~N. Oltvai, ``Network biology: understanding the cell's
  functional organization,'' \emph{Nature reviews genetics}, vol.~5, no.~2, pp.
  101--113, 2004.

\bibitem{sandhu2015graph}
R.~Sandhu, T.~Georgiou, E.~Reznik, L.~Zhu, I.~Kolesov, Y.~Senbabaoglu, and
  A.~Tannenbaum, ``Graph curvature for differentiating cancer networks,''
  \emph{Scientific Reports}, vol.~5, no.~1, p. 12323, 2015.

\bibitem{sandhu2016geometry}
R.~Sandhu, S.~Tannenbaum, T.~Georgiou, and A.~Tannenbaum, ``Geometry of
  correlation networks for studying the biology of cancer,'' in \emph{2016 IEEE
  55th Conference on Decision and Control (CDC)}.\hskip 1em plus 0.5em minus
  0.4em\relax IEEE, 2016, pp. 2501--2506.

\bibitem{liu2020computational}
C.~Liu, Y.~Ma, J.~Zhao, R.~Nussinov, Y.-C. Zhang, F.~Cheng, and Z.-K. Zhang,
  ``Computational network biology: data, models, and applications,''
  \emph{Physics Reports}, vol. 846, pp. 1--66, 2020.

\bibitem{seccilmics2020uncovering}
D.~Se{\c{c}}ilmi{\c{s}}, T.~Hillerton, D.~Morgan, A.~Tj{\"a}rnberg,
  S.~Nelander, T.~E. Nordling, and E.~L. Sonnhammer, ``Uncovering cancer gene
  regulation by accurate regulatory network inference from uninformative
  data,'' \emph{NPJ systems biology and applications}, vol.~6, no.~1, p.~37,
  2020.

\bibitem{dirac1942bakerian}
P.~A.~M. Dirac, ``Bakerian lecture-the physical interpretation of quantum
  mechanics,'' \emph{Proceedings of the Royal Society of London. Series A.
  Mathematical and Physical Sciences}, vol. 180, no. 980, pp. 1--40, 1942.

\bibitem{feynman1987negative}
R.~P. Feynman, ``Negative probability,'' \emph{Quantum implications: essays in
  honour of David Bohm}, pp. 235--248, 1987.

\bibitem{bartlett1945negative}
M.~S. Bartlett, ``Negative probability,'' in \emph{Mathematical Proceedings of
  the Cambridge Philosophical Society}, vol.~41, no.~1.\hskip 1em plus 0.5em
  minus 0.4em\relax Cambridge University Press, 1945, pp. 71--73.

\bibitem{burgin2010interpretations}
M.~Burgin, ``Interpretations of negative probabilities,'' \emph{arXiv preprint
  arXiv:1008.1287}, 2010.

\bibitem{baptista2023charting}
A.~Baptista, B.~D. MacArthur, and C.~R. Banerji, ``Charting cellular
  differentiation trajectories with {R}icci flow,'' \emph{bioRxiv}, pp.
  2023--07, 2023.

\bibitem{chen2021stochastic}
Y.~Chen, T.~T. Georgiou, and M.~Pavon, ``Stochastic control liaisons: {R}ichard
  {S}inkhorn meets {G}aspard {M}onge on a {S}chr\"odinger bridge,''
  \emph{{SIAM} Review}, vol.~63, no.~2, pp. 249--313, 2021.

\bibitem{zhou2021optimal}
H.~Zhou, ``Optimal transport on networks,'' \emph{IEEE Control Systems
  Magazine}, vol.~41, no.~4, pp. 70--81, 2021.

\bibitem{georgiou2015positive}
T.~T. Georgiou and M.~Pavon, ``Positive contraction mappings for classical and
  quantum {S}chr{\"o}dinger systems,'' \emph{Journal of Mathematical Physics},
  vol.~56, no.~3, 2015.

\bibitem{DGT}
A.~Dong, T.~T. Georgiou, and A.~Tannenbaum, ``Negative probabilities in high
  dimensions: Sign-indefinite priors for data assimilation,'' \emph{in
  preparation}, 2023.

\end{thebibliography}

\end{document}